\newenvironment{proof}{\begin{trivlist}
                       \item[]{\bf Proof.}
                       \hspace{0cm}}{\hfill $\Box$
                       \end{trivlist}}
\def\Re{\mathop{\rm Re}}
\date{}
\begin{document}


\centerline{}

\centerline{}

\centerline {\Large{\bf Inverse scattering with the data at fixed energy}}

\centerline{}

\centerline{\Large{\bf and fixed incident direction.}}

\centerline{}

\centerline{\bf {A. G. Ramm}} 

\centerline{}

\centerline{Mathematics Department, Kansas State University,}

\centerline{Manhattan, KS 66506-2602, USA}

\centerline{ email: ramm@math.ksu.edu}

\renewcommand{\thefootnote}{\fnsymbol{footnote}}

\newtheorem{Theorem}{\quad Theorem}[section]

\newtheorem{Definition}[Theorem]{\quad Definition}

\newtheorem{Corollary}[Theorem]{\quad Corollary}

\newtheorem{Lemma}[Theorem]{\quad Lemma}

\newtheorem{Example}[Theorem]{\quad Example}

\newtheorem{Remark}[Theorem]{\quad Remark}

\begin{abstract} \noindent Consider the Schr\"{o}dinger operator $-\nabla^2+q$ with a smooth compactly supported potential
$q$, $q=q(x), x \in \mathbf{R}^3$. \\
Let $A(\beta,\alpha, k)$ be the corresponding scattering amplitude,
$k^2$ be the energy, $\alpha \in S^2$ be the incident direction,
$\beta \in S^2$  be the direction of scattered wave, $S^2$ be the
unit sphere in $\mathbf{R}^3$. Assume that $k=k_0 >0$ is fixed, and
$\alpha=\alpha_0$ is fixed. Then the scattering data are $A(\beta)=
A(\beta,\alpha_0, k_0)=A_q(\beta)$ is a function on $S^2$. The following inverse scattering problem is studied:\\
\textit{IP: Given an arbitrary $f \in L^2(S^2)$ and an arbitrary small number $\epsilon >0$, can one find
$q \in C_0^{\infty}(D)$, where $D \in \mathbf{R}^3$ is an arbitrary fixed domain, such that
$||A_q(\beta)-f(\beta)||_{L^2(S^2)} < \epsilon$? }\\
A positive answer to this question is given. A method for
constructing such a $q$ is proposed. There are infinitely many such
$q$, not necessarily real-valued.
\end{abstract}

{\bf MSC:} 35R30; 35J10; 81Q05 \\

{\bf Keywords:} Inverse scattering; fixed energy and incident
direction.

\section{Introduction}
Consider the scattering problem: \\
Find the solution to the equation
\begin{equation} \label{eq1}
    [\nabla^2+k^2-q(x)]u=0 \quad \text{ in } \mathbf{R}^3,
\end{equation}
such that
\begin{equation} \label{eq2}
    u=e^{ik \alpha\cdot x}+A(\beta,\alpha,k)\frac{e^{ik r}}{r}+o\left(\frac{1}{r}\right), \quad r=|x| \to \infty,
    \quad \frac{x}{r}=\beta,
\end{equation}
where $\alpha \in S^2$ is a given unit vector, $S^2$ is the unit
sphere, $k=const >0$, $k^2$ is the energy, $\alpha$ is the direction
of the incident plane wave $u_0:=e^{ik \alpha \cdot x}$ is the
incident plane wave, $\beta$ is the direction of the scattered wave.
The function
\begin{equation} \label{eq3}
A(\beta,\alpha,k)=A_q(\beta,\alpha,k)
\end{equation}
 is called the scattering
amplitude corresponding to the potential $q(x)$.

If $q(x) \in C_0^{\infty}(\mathbf{R}^3)$ and is a real valued
function, then the scattering problem \eqref{eq1}-\eqref{eq2} has a
unique solution, the scattering solution. There is a large
literature on this topic, see, for example, \cite{A} and and
references therein. The scattering theory has been developed for
much larger classes of potentials, not necessarily smooth and
compactly supported.

We prove existence and uniqueness of the scattering solution
assuming that Im$q\le 0$, see Lemma 2.5 in Section 2.
The inverse scattering problem consists in finding $q(x)$ in a
certain class of potentials from the knowledge of the scattering
data $A(\beta,\alpha,k)$ on some subsets of the set $S^2 \times S^2
\times \mathbf{R}_+$, where $\mathbf{R}_+ = [0, \infty)$. If
$A(\beta,\alpha,k)$ is known everywhere in the above set, then the
inverse scattering problem is easily seen to be uniquely solvable in
the class of $L_0^2(\mathbf{R}^3)$, that is, in the class of
compactly supported square-integrable potentials, and in much larger
class of potentials. If the scattering data is given at a fixed
energy, and $k=k_0>0$ for all $\beta \in S^2$ and all $\alpha \in
S^2$, then uniqueness of the solution to inverse scattering problem
was proved originally in \cite{R228}.  An algorithm for finding
$q(x)$ from the {\it exact} fixed-energy scattering data and from
{\it noisy} fixed-energy scattering data was given in \cite{R425},
where the {\it error estimates} of the proposed solution were also
obtained, see Chapter 5 in  \cite{R470}.

Only recently the uniqueness of the solution to inverse scattering
problem with {\it non-over-determined} data $A(-\beta,\beta, k)$ and
$A(\beta,\alpha_0, k)$  was proved, see \cite{R584}, and
\cite{R589}, \cite{R603}.

The data $A(-\beta,\beta, k), \forall \beta \in S^2$  and all $k>0$
are the {\it back-scattering} data, the $A(\beta,\alpha_0, k),
\forall \beta \in S^2$ and $\forall k >0$ are the {\it fixed
incident direction} data. The scattering data are called
"non-over-determined" if these data depend on the same number of
variables as the unknown potential, that is on three variables in
the above problems.

Note that the data $A(\beta,\alpha, k_0), \forall \alpha, \beta \in
S^2$ and a fixed $k=k_0 >0$ are over-determined: they depend on four
variables while $q$ depends on three variables.

The inverse problem IP with the data 
$A(\beta):=A_q(\beta):=A(\beta,\alpha_0,
k_0)$ is {\it under-determined}: its data depend on two variables. This
problem, in general, does not have a unique solution in  sharp
contrast to the inverse scattering problems mentioned above. The IP
was not studied in the literature.

In this paper the IP is studied. Assume that 
$D\subset \mathbf{R}^3$ is an arbitrary fixed bounded domain.

Let us formulate the inverse problem:

IP: {\it Given an arbitrary $f\in L^2(S^2)$ and an arbitrary small
number $\epsilon>0$, find a $q\in C^\infty_0(D)$ such that  
\begin{equation}\label{eq4}
    ||A(\beta)-f(\beta)||_{L^2(S^2)} < \epsilon.
\end{equation}
}

The IP's formulation differs from the formulation of the inverse
scattering problems discussed earlier:

i) there is no unique solution for the problem we are discussing,

ii) in place of the exact (or noisy) scattering data a function
$f(\beta)$ on $S^2$ is given, which, in general, is not a scattering
amplitude at a fixed $\alpha=\alpha_0$ and a fixed $k=k_0$
corresponding to any potential from $L^2(D)$.

The main results of this paper include: \\
a) A proof of the existence of $q \in C_0^\infty(D)$ such that \eqref{eq4} 
holds;\\
b) A method for finding a potential $q \in C_0^\infty(D)$ for which
\eqref{eq4} holds;\\
c) an analytical formula for a function $h=qu$, where $u$ is the
scattering solution at a fixed $k$ and a fixed $\alpha$, corresponding to 
$q$.

In section 2 we prove that the set $\{A(\beta)\}$ corresponding to
all $q \in C_0^\infty(D)$, is dense in $L^2(S^2)$, and that the set
of functions $\{h\}=\{qu\}$ is dense in $L^2(D)$ when $q$ runs
through all of $L^2(D)$. Here $u=u(x)=u(x,\alpha_0, k_0)$ is the
scattering solution corresponding to the potential $q$, that is, the
solution to the scattering problem \eqref{eq1}-\eqref{eq2} with
$\alpha=\alpha_0 \in S^2$ and $k=k_0 >0$.

In section 3 an analytical formula for $q$ is given. The $q$ computed by this formula
generates $A(\beta)$ satisfying \eqref{eq4}.

 We do not discuss in this paper the relation of our results with the theory
 of creating materials with a desired refraction coefficient, see  \cite{R515}, \cite{R595}.

\section{The density of the set $A(\beta)$ in $L^2(S^2)$}
Let us start by proving Lemma the following lemma.
\begin{Lemma} \label{lm1}
    If the set $\{A(\beta)\}$ is dense in $L^2(S^2)$ when $q$ runs through all of $L^2(D)$, then
    it is dense in $L^2(S^2)$ when $q$ runs through $C_0^\infty(D)$.
\end{Lemma}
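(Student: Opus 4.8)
The plan is to exploit the continuity of the map $q \mapsto A_q(\beta)$ from an appropriate topology on potentials to $L^2(S^2)$, together with the density of $C_0^\infty(D)$ in $L^2(D)$. Concretely, suppose $f \in L^2(S^2)$ and $\epsilon > 0$ are given. By hypothesis there is some $p \in L^2(D)$ with $\|A_p - f\|_{L^2(S^2)} < \epsilon/2$. Since $C_0^\infty(D)$ is dense in $L^2(D)$, pick $q_n \in C_0^\infty(D)$ with $q_n \to p$ in $L^2(D)$. It then suffices to show $A_{q_n} \to A_p$ in $L^2(S^2)$, since then for $n$ large $\|A_{q_n} - f\| \le \|A_{q_n} - A_p\| + \|A_p - f\| < \epsilon$, and $q_n \in C_0^\infty(D)$.

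The remaining work is the continuity statement: if $q_n \to q$ in $L^2(D)$ (all supported in the fixed bounded $D$), then $A_{q_n}(\beta,\alpha_0,k_0) \to A_q(\beta,\alpha_0,k_0)$ in $L^2(S^2)$. I would run this through the Lippmann--Schwinger integral equation for the scattering solution $u$,
\begin{equation}\label{LS}
u(x) = e^{ik_0\alpha_0\cdot x} - \int_D \frac{e^{ik_0|x-y|}}{4\pi|x-y|} q(y) u(y)\, dy,
\end{equation}
with the companion formula for the scattering amplitude
\begin{equation}\label{amp}
A_q(\beta) = -\frac{1}{4\pi}\int_D e^{-ik_0\beta\cdot y} q(y) u(y)\, dy.
\end{equation}
First one shows that the operator $T_q h := \int_D \frac{e^{ik_0|x-y|}}{4\pi|x-y|} q(y) h(y)\,dy$ depends continuously on $q$ (in $L^2(D)$) as a bounded operator on, say, $L^2(D)$ or $L^\infty(D) \cap C(\overline D)$ — the weakly singular kernel is handled by standard estimates, and the $q_n$-dependence is linear in $q$, so $\|T_{q_n} - T_q\| \to 0$ follows from $\|q_n - q\|_{L^2(D)} \to 0$ by a Cauchy--Schwarz bound on the kernel. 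By Lemma 2.5 (existence and uniqueness of the scattering solution), $I + T_q$ is invertible; since the set of invertible operators is open and inversion is continuous, $u_{q_n} = (I + T_{q_n})^{-1} e^{ik_0\alpha_0\cdot x} \to u_q$ in the relevant norm. Then \eqref{amp}, again linear in the pair $(q, u)$ with a bounded kernel on the fixed compact $D \times S^2$, gives $A_{q_n} \to A_q$ in $C(S^2)$, hence in $L^2(S^2)$.

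The main obstacle is making the operator-theoretic continuity rigorous without circularity: one needs that $I + T_q$ stays invertible along the sequence, which requires either the uniform invertibility supplied by openness of the invertible set (once $\|T_{q_n} - T_q\|\to 0$ is established) or a direct Fredholm/uniqueness argument using Im$\,q \le 0$ as in Lemma 2.5. A secondary technical point is the choice of function space so that the weakly singular integral operator is bounded and its norm is controlled by $\|q\|_{L^2(D)}$ — working in $L^2(D)$ is cleanest since $\int_D\int_D |q(y)|^2 |x-y|^{-2}\,dy\,dx < \infty$ for $q \in L^2$ on a bounded set in $\mathbf R^3$, giving a Hilbert--Schmidt-type bound; alternatively one truncates near the diagonal. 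Everything else is bookkeeping: the density of $C_0^\infty(D)$ in $L^2(D)$ is classical, and the triangle-inequality combination at the end is immediate.
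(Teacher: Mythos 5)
Your proof is correct and follows the same overall skeleton as the paper (density of $C_0^\infty(D)$ in $L^2(D)$ plus continuity of $q\mapsto A_q$ plus the triangle inequality), but you establish the continuity step by a genuinely different route. The paper derives the Lipschitz estimate $\|A_{q_1}-A_{q_2}\|_{L^2(S^2)}\le c\|q_1-q_2\|_{L^2(D)}$ in one line from the author's identity
$-4\pi[A_{q_1}(\beta)-A_{q_2}(\beta)]=\int_D[q_1-q_2]\,u_1(x,\alpha_0,k_0)\,u_2(x,-\beta,k_0)\,dx$
together with a uniform sup-bound on scattering solutions; this is quantitative and short, at the cost of importing two nontrivial facts as black boxes. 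You instead perturb the Lippmann--Schwinger operator: the Hilbert--Schmidt bound $\|T_{q_1}-T_{q_2}\|\le c\|q_1-q_2\|_{L^2(D)}$ (your computation $\int_D|x-y|^{-2}dx\le C$ is exactly right in $\mathbf{R}^3$), openness of the invertible operators, continuity of inversion, and then the amplitude formula with the splitting $q_nu_n-pu_p=(q_n-p)u_n+p(u_n-u_p)$. This is more self-contained and yields only sequential continuity rather than a Lipschitz constant, which is all the lemma needs. Your flagged concern about invertibility of $I+T_p$ is not actually an obstacle here: the hypothesis of the lemma only quantifies over those $p\in L^2(D)$ for which $A_p$ is defined, i.e.\ for which the scattering solution exists, so $I+T_p$ is invertible by assumption and the openness argument then covers the approximants $q_n$ for $n$ large. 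One could note that the paper's uniform bound \eqref{eq7} quietly hides the same resolvent-control issue your approach makes explicit.
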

\begin{proof}
    The set $C_0^\infty(D)$ is dense in $L^2(D)$ (for example, in $L^2(D)$ norm), and the scattering
    amplitude $A(\beta)$ depends continuously in the norm of $L^2(S^2)$ on $q$, that is,
    \begin{equation} \label{eq5}
        ||A_{q_1}(\beta)-A_{q_2}(\beta)||_{L^2(S^2)} \leq c||q_1-q_2||_{L^2(D)}.
    \end{equation}
    Estimate \eqref{eq5} follows, for example,  from the known lemma of the author (see \cite{R470}, p.262):
    \begin{equation} \label{eq6}
        -4\pi[A_{q_1}(\beta)-A_{q_2}(\beta)]=\int_D [q_1(x)-q_2(x)]u_1(x,\alpha_0, k_0)u_2(x,-\beta, k_0)dx,
    \end{equation}
    and the well-known estimate
    \begin{equation} \label{eq7}
        \sup_{x \in D} |u(x,\alpha, k)| \leq c,
    \end{equation}
    where $c>0$ is a constant depending on the $L^2(D)$ norm of 
   $q$ and uniform with respect
    to  $k \in [a,\infty)$, $a>0$ is a constant and $\alpha \in S^2$. 
Lemma 2.1. is
    proved.
\end{proof}
Thus, in what follows it is sufficient to establish the density of
the set $\{A_q(\beta)\}$ in $L^2(S^2)$ when $q$ runs through
$L^2(D)$.
\begin{Theorem} \label{thm1}
    For any $f \in L^2(S^2)$ and any $\epsilon >0$ there exists a $q \in C_0^\infty(D)$ such that
    estimate \eqref{eq4} holds, where $A(\beta)=A_q(\beta)$ is the
    scattering amplitude corresponding to $q$,
    and $\alpha=\alpha_0 \in S^2, k =k_0 >0$ are fixed.
\end{Theorem}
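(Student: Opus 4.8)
The plan is to reduce the statement, via Lemma \ref{lm1}, to producing a potential in $L^2(D)$ (the construction will in fact give a smooth one), and then to work in the ``linearizing'' variable $h:=qu$. Write $u_0:=e^{ik_0\alpha_0\cdot x}$ and let $G$ be the volume potential, $(Gh)(x):=\int_D\frac{e^{ik_0|x-y|}}{4\pi|x-y|}h(y)\,dy$, so that the scattering solution satisfies the Lippmann--Schwinger equation $u_q=u_0-G(qu_q)$. Taking $q_2=0$ in \eqref{eq6} gives $-4\pi A_q(\beta)=\int_D e^{-ik_0\beta\cdot x}h(x)\,dx=:(\Phi h)(\beta)$ with $h=qu_q$; thus $\Phi h$ is the restriction to the sphere $\{|\xi|=k_0\}$ of the Fourier transform of $h$ (extended by zero), and $\Phi:L^2(D)\to L^2(S^2)$ is bounded. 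Since $\Phi^{\ast}g$ is, up to constants, the Fourier transform of the measure $g\,d\sigma$ on $\{|\xi|=k_0\}$ restricted to $D$, it is an entire function of $x$; if it vanishes on $D$ it vanishes identically, whence $g\,d\sigma=0$ and $g=0$. So $\Phi$ has dense range, and it suffices to exhibit $h\in C_0^\infty(D)$ with $\|\Phi h+4\pi f\|_{L^2(S^2)}<4\pi\epsilon$ which is \emph{realizable}, i.e. equals $qu_q$ for an admissible $q$.

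The realization step is an elementary observation: given $h\in C_0^\infty(D)$, set $u:=u_0-Gh$; then $(\nabla^2+k_0^2)u=h$ in $\mathbf{R}^3$ and $u-u_0=-Gh$ satisfies the radiation condition. Hence, provided $u(x)\ne 0$ for all $x\in\overline D$, the potential $q:=h/u$ lies in $C_0^\infty(D)$ (it is smooth and supported where $h$ is), $u$ solves $(\nabla^2+k_0^2-q)u=0$ with the correct asymptotics, so $u$ is a scattering solution for $q$ and $A_q=-\tfrac1{4\pi}\Phi h$. Everything therefore reduces to: \emph{find $h\in C_0^\infty(D)$ with $\|\Phi h+4\pi f\|<4\pi\epsilon$ and $u_0-Gh$ nowhere zero on $\overline D$.}

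To handle this I would first pick any $h_0\in C_0^\infty(D)$ with $\|\Phi h_0+4\pi f\|<2\pi\epsilon$ (possible by the dense-range property) and then correct it inside the kernel of $\Phi$: for $\phi\in C_0^\infty(D)$, integration by parts gives $G(\nabla^2+k_0^2)\phi=-\phi$, so $(\nabla^2+k_0^2)\phi\in\ker\Phi$; with $h:=h_0+(\nabla^2+k_0^2)\phi$ one has $\Phi h=\Phi h_0$ and $u_0-Gh=(u_0-Gh_0)+\phi=:w_0+\phi$. Since $w_0\to u_0$ at infinity, its zero set is bounded, and, $w_0$ being complex-valued, after an arbitrarily small perturbation of $h_0$ the zero set is a smooth one-dimensional variety; the aim is to choose $\phi\in C_0^\infty(D)$ so that $w_0+\phi$ is nowhere zero on $\overline D$, engulfing the zeros of $w_0$ in a ball $B\Subset D$ on whose boundary $S^2$ a nowhere-vanishing $\mathbf{C}$-valued datum extends to a nowhere-vanishing function on $\overline B$ (because $[S^2,S^1]=0$) and letting $w_0+\phi$ equal that extension inside $B$ and equal $w_0$ outside.

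I expect the nonvanishing step to be the main obstacle: the correction $\phi$ vanishes near $\partial D$, so one must separately control the zeros of $w_0$ there — either by arranging $w_0\ne 0$ near $\partial D$ through the choice of $h_0$ (using that $Gh_0$ is forced to be large near $\partial D$ precisely when $f$ is large there), or by working with $q$ supported in a ball $\Subset D$; making this quantitative is presumably where the analytical formula of Section 3 enters. A secondary point requiring care is that the forward problem for the constructed, generally complex, $q$ be well posed: the particular $u$ we build is a scattering solution because $u-u_0$ is outgoing, but uniqueness of the scattering solution should be obtained either from Lemma 2.5 (which needs $\operatorname{Im}q\le 0$) or by a direct argument for this $q$. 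Granting these, $\|A_q-f\|=\tfrac1{4\pi}\|\Phi h_0+4\pi f\|<\epsilon/2<\epsilon$, and since there is enormous freedom in $h_0$ and in $\phi$, infinitely many such $q$ (not necessarily real-valued) are obtained.
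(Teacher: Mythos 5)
Your overall architecture coincides with the paper's: reduce to $q\in L^2(D)$ via Lemma \ref{lm1}, prove that the map $\Phi h=\int_D e^{-ik_0\beta\cdot y}h(y)\,dy$ has dense range in $L^2(S^2)$ by exactly the paper's duality argument (the adjoint $\Phi^{*}g$ is the Fourier transform of a distribution supported on the sphere $|\xi|=k_0$, hence entire, hence $\equiv 0$ if it vanishes on $D$ --- this is \eqref{eq11}--\eqref{eq13}), and then realize $h$ as $qu$ by setting $u=u_0-Gh$ and $q=h/u$ as in \eqref{eq14}--\eqref{eq16}. Your remark that the resulting $u$ is automatically an outgoing scattering solution for the (possibly complex) $q$ whenever $q=h/u\in L^2(D)$ is also the paper's observation following Lemma \ref{lm2}. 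The genuine gap is in your treatment of the zero set of $u$, and you have correctly located it yourself. Your plan is to keep $\Phi h$ \emph{exactly} fixed by adding $(\nabla^2+k_0^2)\phi\in\ker\Phi$ and choosing $\phi$ to push $w_0=u_0-Gh_0$ off zero via the extension argument based on $\pi_2(S^1)=0$. But $\phi$ must be compactly supported in the region where $q$ is allowed to live, so it vanishes near the boundary of that region, and nothing in your construction prevents the zero set of $w_0$ --- which is compact but need not be contained in $D$, let alone in a ball $B\Subset D$ --- from meeting the collar where $\phi=0$. If the zero curve of $w_0$ crosses $\partial B$ (or $\partial D$), the engulfing step fails, and you offer no mechanism to rule this out. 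As stated, the proof is therefore incomplete at its decisive step.

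The paper resolves this differently, and the difference is worth noting: instead of insisting that $\Phi h$ be preserved exactly, it perturbs $h$ itself. In Theorem \ref{thm2} and Lemmas \ref{lm3}--\ref{lm4} one argues that the zero set $N=\{u=0\}$ is generically a smooth curve (codimension $2$, since $\Re u$ and $\Im u$ must vanish simultaneously), takes the tubular neighborhood $N_\delta=\{|u|<\delta\}$, and replaces $h$ by $h_\delta$ which equals $h$ outside $N_\delta$ and $0$ inside. Then $q_\delta=h_\delta/u_\delta$ is bounded because $h_\delta$ vanishes identically wherever $u_\delta$ could be small (the Newtonian-potential estimate \eqref{eq37} shows $|u_\delta|\ge\delta/2$ on $D_\delta$), while $\|h_\delta-h\|_{L^2(D)}$ is small since $|N_\delta|=O(\delta^2)$; the induced change in $A(\beta)$ is then controlled by \eqref{eq5}--\eqref{eq6}. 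This trades your exact equality $\Phi h=\Phi h_0$ for an $o(1)$ error that is absorbed into the $\epsilon$ of \eqref{eq4}, and it works regardless of where the zero curve sits relative to $\partial D$. If you want to salvage your kernel-correction route you would need an additional argument forcing the zero set of $w_0$ away from the boundary of $\supp\phi$; absent that, the paper's cutoff is the simpler and complete path.
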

\begin{proof}
    By Lemma \ref{lm1}, it is sufficient to prove that the set $\{A(\beta)\}$ is dense in $L^2(S^2)$ when
    $q$ runs through all of $L^2(D)$. Assuming the contrary, one finds a function $f \in L^2(S^2)$ which
    is orthogonal in $L^2(S^2)$ to any function $A(\beta)$. It is
    well-known that
    \begin{equation} \label{eq8}
        -4\pi A(\beta)=\int_D e^{-ik\beta\cdot y}h(y)dy, \qquad h(y):=q(y)u(y),
    \end{equation}
    where $u(y)=u(y,\alpha_0, k_0)$ is the scattering solution. Using the
    orthogonality of $f$ to $A(\beta)$, one gets
    \begin{equation} \label{eq9}
        \int_{S^2} f(\beta)\int_D e^{-i k\beta\cdot y}h(y)dy d\beta=0, \quad \forall q \in L^2(D).
    \end{equation}
    We prove later that when $q$ runs through all of $L^2(D)$, the corresponding $h$ runs
    through a set $ \tilde{L}$ dense in $L^2(D)$.
    Taking this for granted, one can replace in \eqref{eq9} the 
   expression $\forall q \in L^2(D)$ by
    $\forall h \in L^2(D)$, and rewrite \eqref{eq9} as follows:
    \begin{equation} \label{eq10}
        \int_D h(y) \int_{S^2} f(\beta) e^{-ik\beta\cdot y}d\beta dy=0, 
\quad \forall h \in \tilde{L}.
    \end{equation}
    This implies
    \begin{equation} \label{eq11}
        \int_{S^2} f(\beta) e^{-ik \beta\cdot y}d\beta =0, \quad \forall y \in L^2(D),
    \end{equation}
    where $k=k_0$ is fixed. In what follows, we write everywhere $k$ for $k_0$ and $\alpha$ for $\alpha_0$.
    The integral in \eqref{eq11} can be considered as the Fourier transform of a compactly supported distribution
    \begin{equation} \label{eq12}
        g(\lambda,\beta):=g(\xi):=f\left(\frac{\xi}{|\xi|}\right)\frac{\delta(|\xi|-k)}{|\xi|^2},
    \end{equation}
    where $\xi \in \mathbf{R}^3, \lambda=|\xi|, \beta=\frac{\xi}{\lambda},$ and
    \begin{equation} \label{eq13}
        \tilde{g}(y):=\int_{\mathbf{R}^3} g(\xi) e^{-i\xi\cdot y}d\xi=
        \int_0^\infty\lambda^2d\lambda \int_{S^2}e^{-i\lambda \beta \cdot y} g(\lambda,\beta)d\beta.
    \end{equation}
    Since distribution \eqref{eq12} is supported on the sphere $|\xi|=k$, which is a compact set
    in $\mathbf{R}^3$, its Fourier transform is an entire function of $y$.
    This function vanishes in an open in $\mathbf{R}^3$ set $D$ by \eqref{eq11}. Therefore, it
    vanishes everywhere in $\mathbf{R}^3$. By the injectivity of the Fourier transform one concludes
    that $f(\beta)=0$. Therefore the assumption that the set $\{A(\beta)\}$ is not dense in $L^2(S^2)$
    is false. Theorem \ref{thm1} is proved under the assumption that the set $\{h\}$ is dense in $L^2(D)$ when
    $q$ runs through all of $L^2(D)$.
In Theorem \ref{thm2}, see below,  this density statement is proved.
Thus, one can consider Theorem \ref{thm1} proved.
\end{proof}
\begin{Remark}
    If one defines $A(\beta)=-\frac{1}{4\pi}\int_D e^{-ik \beta\cdot y} h(y) dy$ and assumes that $\{h\}$
    runs through a dense subset of $L^2(D)$, then the corresponding set $\{A(\beta)\}$ is dense in
    $L^2(S^2)$, as follows from our proof of Theorem \ref{thm1}.
\end{Remark}
\begin{Theorem} \label{thm2}
    The set $\{h(x)\}$ is dense in $L^2(D)$ when $q(x)$ runs through a dense subset of $L^2(D)$,
    where $h(x)=q(x)u(x)$, and $u(x)$ is the scattering solution.
\end{Theorem}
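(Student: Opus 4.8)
The plan is to prove density of $\{h\} = \{qu\}$ in $L^2(D)$ by a duality/contradiction argument, exactly parallel to the proof of Theorem \ref{thm1}. Suppose the set $\{qu : q \in L^2(D)\}$ is not dense in $L^2(D)$. Then there is a nonzero $w \in L^2(D)$ orthogonal to every $qu$, i.e. $\int_D q(x)\,u(x)\,\overline{w(x)}\,dx = 0$ for all $q \in L^2(D)$, where $u = u(x,\alpha_0,k_0)$ is the scattering solution \emph{corresponding to that same $q$}. The subtlety — and the main obstacle — is that $u$ depends on $q$, so this is not immediately a linear condition on $q$; one cannot simply conclude $u\overline{w} = 0$ a.e.

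**To handle this**, the natural move is to exploit the freedom to take $q$ small. Write $u = u_0 + v$, where $u_0 = e^{ik_0\alpha_0\cdot x}$ is the incident field and $v$ is the scattered part, and recall that $v$ solves the Lippmann–Schwinger equation $v = -\int_D g(x,y,k_0)\,q(y)\,u(y)\,dy$ with $g$ the outgoing free resolvent kernel. The estimate \eqref{eq7} (uniform boundedness of $u$) together with the smalling-parameter trick gives $\|v\|_{L^2(D)} = \|u - u_0\|_{L^2(D)} \le c\,\|q\|_{L^2(D)}$. Now replace $q$ by $tq$ for small $t > 0$: the orthogonality relation reads $\int_D t\,q(x)\,u_t(x)\,\overline{w(x)}\,dx = 0$, hence (dividing by $t$) $\int_D q(x)\,u_t(x)\,\overline{w(x)}\,dx = 0$ for all $t \in (0, t_0)$. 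Letting $t \to 0$ and using $u_t \to u_0$ in $L^2(D)$ (from $\|u_t - u_0\| \le ct\|q\|$), we obtain the \emph{linear} condition
\begin{equation} \label{eqlin}
    \int_D q(x)\,u_0(x)\,\overline{w(x)}\,dx = 0 \qquad \forall q \in L^2(D).
\end{equation}

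**From \eqref{eqlin} the conclusion is immediate**: since $q$ ranges over all of $L^2(D)$, we get $u_0(x)\,\overline{w(x)} = 0$ a.e. in $D$. But $u_0(x) = e^{ik_0\alpha_0\cdot x}$ never vanishes, so $w = 0$ a.e. in $D$, contradicting $w \ne 0$. Hence $\{qu\}$ is dense in $L^2(D)$ as $q$ runs through $L^2(D)$, and a fortiori through any dense subset (or through $C_0^\infty(D)$, by an approximation argument using \eqref{eq7} again). I expect the only point needing care to be the justification that $u_t \to u_0$ in $L^2(D)$ as $t \to 0$ — i.e. that the constant $c$ in the Lippmann–Schwinger estimate can be taken uniform for all $q$ with $\|q\|_{L^2(D)}$ bounded, which follows from \eqref{eq7} and the boundedness of the free resolvent from $L^2(D)$ to $L^2(D)$. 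Everything else is routine duality.
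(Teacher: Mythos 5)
Your argument has a genuine gap at its very first step, before any of the analysis you flag as delicate. The set $S:=\{q\,u_q : q \in L^2(D)\}$ is not a linear subspace of $L^2(D)$ (it is not closed under addition, and $u_q$ depends nonlinearly on $q$), so the implication ``$S$ is not dense $\Rightarrow$ there exists a nonzero $w\in L^2(D)$ orthogonal to every element of $S$'' is not valid: orthogonal complements detect only the closed \emph{linear span} of a set, and a non-convex set can fail to be dense while having trivial orthogonal complement (the unit sphere of $L^2(D)$ is an example). Consequently, what your duality-plus-linearization argument actually proves --- and proves correctly; the rescaling $q\mapsto tq$ together with $\|u_{tq}-u_0\|_{L^\infty(D)}=O(t)$ is a clean way to extract the linear condition $\int_D q\,u_0\,\overline{w}\,dx=0$ --- is that the closed linear span of $S$ is all of $L^2(D)$, i.e.\ that $S$ is total. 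That is strictly weaker than the density of the set itself asserted in Theorem \ref{thm2}. (A smaller point: the scattering solution need not exist for every complex $q\in L^2(D)$, so your test potentials should be restricted to real-valued $q$, or to $\mathrm{Im}\,q\le 0$, or to $t$ small enough for the Neumann series to converge; any of these still forces $u_0\overline{w}=0$ a.e., hence $w=0$ since $|u_0|=1$.)

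Totality of $\{h\}$ would in fact suffice for the one place Theorem \ref{thm2} is invoked in the proof of Theorem \ref{thm1} (passing from \eqref{eq10} to \eqref{eq11} only requires that the $h$'s span a dense subspace), but it does not suffice for the theorem as stated, nor for Section 3, where a prescribed $h^{(L)}$ must itself be approximated by some $qu_q$. The paper's proof goes in the opposite, constructive direction and avoids duality altogether: given a target $h$, it defines $u$ by \eqref{eq14} and $q:=h/u$ by \eqref{eq16}, so that $h=qu$ exactly whenever this $q$ lies in $L^2(D)$; the only obstruction is the zero set of $u$, and Lemmas \ref{lm3} and \ref{lm4} show that cutting $h$ off in a $\delta$-tube around that (generically one-dimensional) zero set perturbs $h$ by $O(\delta^2)$ in $L^2(D)$ while making $q_\delta=h_\delta/u_\delta$ bounded. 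To repair your proof you would need to upgrade totality to genuine density --- for instance by the paper's explicit inversion $h\mapsto q$, or by an inverse-function-theorem argument showing $q\mapsto qu_q$ is locally onto; either way the nonlinear inversion, not duality, is the essential content of this theorem.
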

\begin{proof}
    If $u$ is the scattering solution, then
    \begin{align}
        &u(x)=u_0(x)-\int_D g(x,y)h(y)dy, \quad u_0(x):=e^{ik \alpha\cdot x},\quad h=qu, \label{eq14} \\
        &g(x,y)=\frac{e^{i k|x-y|}}{4\pi|x-y|}. \label{eq15}
    \end{align}
    Define
    \begin{equation} \label{eq16}
            q(x)=\frac{h(x)}{u(x)}=\frac{h(x)}{u_0(x)-\int_D g(x,y) h(y) dy}.
    \end{equation}
    If the function $q(x)$, defined in \eqref{eq16}, belongs to $L^2(D)$, then the function $u(x)$,
    defined in \eqref{eq14}, is the scattering solution, corresponding to $q \in L^2(D)$,
    defined by formula \eqref{eq16}. Uniqueness of the scattering 
solution is guaranteed if Im$q \le 0$ by
    the following Lemma \ref{lm2}.
    \begin{Lemma} \label{lm2}
        Assume that Im$q \leq 0$ and $q \in L^2(D), q=0$ in 
$D:=\mathbf{R}^3 \setminus D$. Then there
        exists a unique scattering solution, that is, the solution to problem \eqref{eq1}-\eqref{eq2}.
        This solution is also the unique solution to equation \eqref{eq14} if $h=qu$.
    \end{Lemma}
    \begin{proof}
        It is sufficient to prove uniqueness of the scattering solution. Indeed, the scattering solution
        solves a Fredholm-type Lippmann-Schwinger equation,
$$u(x)=u_0(x)-\int_D g(x,y)q(y)u(y)dy,$$
        and the uniqueness of the solution to
        this equation implies the existence of this solution by the Fredholm alternative.

        Suppose that there are two scattering solutions, $u_1$ and $u_2$, that is, solution to problem
        \eqref{eq1}-\eqref{eq2} (with $\alpha=\alpha_0$ and $k=k_0 >0$). Then the function
        $v:=u_1-u_2$ solves equation \eqref{eq1} and satisfies the radiation condition at infinity:
        \begin{align}
            &[\nabla^2+k^2-q(x)]v=0 \quad \text{ in } \mathbf{R}^3, \label{eq17} \\
            &\frac{\partial v}{\partial r}-ik v=o\left(\frac{1}{r}\right), \quad r:=|x| \to \infty. \label{eq18}
        \end{align}
        Multiply equation \eqref{eq17} by $\bar{v}$, the bar stands for complex conjugate, and the
        complex conjugate of \eqref{eq17} by $v$ and subtract from the first equation the second. The result is
        \begin{equation} \label{eq19}
            \bar{v}(\nabla^2-k^2)v - v(\nabla^2-k^2)\bar{v}-(q-\bar{q})|v|^2=0.
        \end{equation}
        Integrate \eqref{eq19} over a ball $B_R$ of large radius $R$, centered at the origin, and use
        the Green's formula to get
        \begin{equation} \label{eq20}
            \int_{|x|=R} \left(\bar{v}\frac{\partial v}{\partial r}-v\frac{\partial \bar{v}}
            {\partial r}\right)ds - 2i\int_{B_R} \text{Im}q(x)|v(x)|^2dx=0.
        \end{equation}
        Using the radiation condition \eqref{eq18} one rewrites \eqref{eq20} as
        \begin{equation} \label{eq21}
            2ik\int_{|x|=R} |v|^2 ds - 2i \int_{B_R} \text{Im}q|v|^2 dx +o(1)=0,
        \end{equation}
        where $o(1) \to 0$ as $R \to \infty$. Thus, if Im$q \leq 0$ relation \eqref{eq21} implies that
        \begin{equation} \label{eq22}
            \lim_{R\to \infty} \int_{|x|=R} |v|^2 ds=0.
        \end{equation}
        The radiation condition \eqref{eq18} and equation \eqref{eq17} with a compactly supported
        $q \in L^2(D)$, implies that $v=0$ in $\mathbf{R}^3$, see Lemma \ref{lm1} on p.25 in \cite{R190}.

        The scattering solution solves equation \eqref{eq14}, because it satisfies the equation
        \begin{equation} \label{eq23}
            u=u_0 -\int_D g(x,y) q(y) u(y) dy,
        \end{equation}
        so that with the notation $h:=qu$ one gets from \eqref{eq23} equation 
      \eqref{eq14}. 

Conversely,
        assume that $u$ is defined by equation \eqref{eq14} and $h$ in \eqref{eq14} is equal to $qu$,
        where $q$ is defined by formula \eqref{eq16} and $q \in L^2(D)$ . Then this $u$ solves
        equation \eqref{eq23}, and, therefore, it satisfies equations 
\eqref{eq1} and \eqref{eq2}. To
        check this, apply the operator $\nabla^2+k^2$ to equation \eqref{eq14} and use the known formula
        $(\nabla^2+k^2)g=-\delta(x-y)$.

        The result is $(\nabla^2+k^2)u=h=qu$, so equation \eqref{eq1} holds. The radiation condition holds
        because $q$ is compactly supported and $g$ satisfies the radiation condition. Lemma \ref{lm2} is proved.
    \end{proof}
    Although the scattering solution, in general, is not unique if Im$q>0$ and may not exist in this case,
    it does exist, even if Im$q>0$, if $q$, defined by formula \eqref{eq16}, belongs to $L^2(D)$. Indeed,
    then $qu=h$ and $$u(x)=u_0(x)-\int_D g(x,y)q(y)u(y)dy,$$ so that 
$u:=u_0(x)-\int_D g(x,y)h(y)dy$
    is the scattering solution.

    Let us assume now that the function \eqref{eq16} does not belong to
    $L^2(D)$. Since $h \in L^2(D)$
    the function $u(x) \in H_{loc}^2(\mathbf{R}^3)$, where
    $H_{loc}^2(\mathbf{R}^3)$ is the Sobolev space.
    
Changing $h$ slightly  one may assume that $h$ is a smooth
    bounded function in $D$. Such a change leads to a small
    change of the scattering data $A(\beta)$  in $L^2(S^2)$ norm.
    Thus, we will assume below that $h$ is bounded in $D$ in
    absolute value.
    The function \eqref{eq16} does not belong to $L^2(D)$ if and only if
the denominator
    in \eqref{eq16} has zeros. Let
    $$N:=N(u):=\{x: u(x)=0, x \in D\}, \quad N_\delta:=\{x: |u(x)| < \delta, x \in D \},$$
    where $\delta >0$ is a small number, and let $D_{\delta}:=D \setminus N_{\delta}$.

    The idea of the argument below is to show that the set $N$ is generically a line in $\mathbf{R}^3$,
    and that there exists a function
    \begin{equation} \label{eq24}
        h_{\delta}(x)=\left\{
                        \begin{array}{ll}
                            h(x) &\text{ in } D_{\delta}, \\
                            0    &\text{ in } N_{\delta},
                        \end{array}
                    \right.
    \end{equation}
    such that
    \begin{equation} \label{eq25}
        ||h_\delta - h||_{L^2(D)} \leq c\delta^2.
    \end{equation}
    Moreover, the function $h_\delta\in L^\infty(D)$. This function can be made
    smooth by an approximation by a $C^\infty_0(D)$-function. The corresponding
    smooth $h_\delta$ generates $A_\delta(\beta)$ which differs slightly from the original $f(\beta)$.
    The  corresponding $u_\delta(x)$ can be defined as
    follows:
    \begin{equation} \label{eq26}
        u_\delta(x):= u_0(x)-\int_D g(x,y)h_\delta(y)dy,
    \end{equation}
    where
    \begin{equation} \label{eq27}
        q_{\delta}:=\left\{
                        \begin{array}{ll}
                            \frac{h_\delta}{u_\delta} &\text{ in } D_{\delta}, \\
                            0                         &\text{ in } N_{\delta},
                        \end{array}
                    \right.
        \qquad q_\delta \in L^\infty(D).
    \end{equation}
    Consequently, we will prove that a small change of $h$  may be arranged in such a way that the
    corresponding change of $q$ leads to a potential $q_\delta$ which 
belongs to $L^\infty(D)$.
    \begin{Lemma} \label{lm3}
        The set $N$ is a line in $\mathbf{R}^3$.
    \end{Lemma}
    \begin{proof}
        Let $u=u_1+iu_2$, where $u_1=\Re u$ and $u_2=\Im u$. Then the set $N$ is defined by two equations
        in $\mathbf{R}^3$:
        \begin{equation} \label{eq28}
            u_1(x)=0, \quad u_2(x)=0, \quad x \in \mathbf{R}^3.
        \end{equation}
        The functions $u_j \in H_{loc}^2(\mathbf{R}^3)$, $j=1,2$, because $u \in H_{loc}^2(\mathbf{R}^3)$.
        Therefore, each of the two equations in \eqref{eq28} is an equation of a surface.

        The two simultaneous equations \eqref{eq28} generically describe a 
line $\ell$ in $\mathbf{R}^3$.
        By a small perturbation
        of $h$ one may ensure that the line $\ell:=\{x: u_1(x)=0, u_1(x)=0, x \in D\}$ is  smooth in
        $D$ and the vectors $\nabla u_j, j=1,2,$ on $\ell$ are linearly 
independent.

        Lemma \ref{lm3} is proved.
    \end{proof}
    \begin{Lemma} \label{lm4}
        There exists a function \eqref{eq24} such that \eqref{eq25} holds.
    \end{Lemma}
    \begin{proof}
        Consider a tubular neighborhood of the line $\ell$ in $D$. This neighborhood is described by
        the inequality $\rho(x,\ell)\leq \delta$, where $\rho(x,\ell)$  is the distance between $x$ and $\ell$.
        Choose the origin $O$ on $\ell$ and lets the coordinates $s_j, j=1,2$, in the plane orthogonal to $\ell$,
        be directed along the vectors $\nabla u_j|_l$ respectively, while the third coordinate $s_3$
        be directed along the tangent line to $\ell$. The Jacobian $J$ of the transformation
        $(x_1, x_2, x_3) \to (s_1, s_2, s_3)$ is non-singular, 
  $|J|+|J^{-1}| \leq c$, because
        $\nabla u_j, j=1,2$ are linearly independent.

        Define $h_\delta$ by formula \eqref{eq24}, $u_\delta$ by formula \eqref{eq26},
        and $q_\delta$ by formula \eqref{eq27}.

        Note that
        \begin{equation} \label{eq29}
            u_\delta(x)=u(x)+\int_D  g(x,y)[h(y)-h_\delta(y)]dy.
        \end{equation}
        Since $h_\delta=h$ in $D_\delta$ and $h_\delta=0$ in $N_\delta$, one
        obtains inequality \eqref{eq25}:
      $$||h_\delta-h||_{L^2(D)}=||h||_{L^2(N_\delta)}\le
      c\delta^2,$$
      because $h$ is bounded and $\delta^2$ is proportional to the
      area of the cross section of the tubular neighborhood.
      Furthermore,
        \begin{equation} \label{eq30}
            |u_\delta(x)| \geq |u(x)|-c\int_{N_\delta} \frac{dy}{4\pi|x-y|}, \quad c=\max_{x \in N_\delta}|h(x)|.
        \end{equation}
        Denote
        \begin{equation} \label{eq31}
            I(\delta):= \sup_{x \in D_\delta}\int_{N_\delta} \frac{dy}{4\pi|x-y|}.
        \end{equation}
        By construction
        \begin{equation} \label{eq32}
            |u(x)| \geq \delta \quad \text{ if } x\in D_\delta.
        \end{equation}
        Therefore, inequality \eqref{eq30} implies
        \begin{equation} \label{eq33}
            |u_\delta(x)| \geq \delta - I(\delta) \quad \forall x\in D_\delta.
        \end{equation}
        Let us estimate $I_\delta$ as $\delta \to 0$ with the aim to prove that
        \begin{equation} \label{eq34}
            \inf_{x\in D_\delta}|u_\delta(x)|  \geq b(\delta) >0, \qquad \lim_{\delta \to 0} \frac{b(\delta)}{\delta}=1.
        \end{equation}
        So, for sufficiently small
        $\delta >0$ one has $b(\delta) \geq \frac{\delta}{2}$.

        If \eqref{eq34} holds, then $|u_\delta(x)|$ is strictly positive in $D_\delta$ and, therefore,
        $q_\delta:=\frac{h_\delta}{u_\delta}$ is a bounded function in $D_\delta$.

        Since $h_\delta=h$ in $D_\delta$, one obtains
        \begin{equation} \label{eq35}
            ||q_\delta-q||_{L^2(D_\delta)}=\left|\left|\frac{h_\delta}{u_\delta}-\frac{h}{u}\right|\right|_{L^2(D_\delta)}
            \leq ||h||_{C(D_\delta)}\left|\left|\frac{u-u_\delta}{u_\delta u}\right|\right|_{L^2(D_\delta)}
            \leq c\delta\ln{\frac{1}{\delta}}.
        \end{equation}
        Here the following inequalities  for $x\in D_\delta$ were used
        \begin{align}
            &||h||_{C(D_\delta)} \leq c, \quad |u_\delta u| \geq \frac{\delta^2}{2} \quad\text{ in } D_\delta,\nonumber \\
            &|u_\delta(x)- u(x)|\leq  \int_{N_\delta}|g(x,y)h(y)| dy \leq c\delta \int_{N_\delta} \frac{dy}{4\pi|x-y|}
            \leq c\delta I(\delta). \label{eq36}
        \end{align}
        Here  and throughout by $c>0$ various constants, independent of $\delta$, are denoted.

        To estimate $I(\delta)$ one argues as follows:
        \begin{equation} \label{eq37}
            I(\delta) \leq \frac{1}{4\pi} 
\int_{N_\delta}\frac{dy}{|y|}\leq c\int_0^{c\delta} \rho d\rho
            \int_0^1\frac{ds_3}{\sqrt{\rho^2+s_3^2}}\leq c \delta^2 \ln{\frac{1}{\delta}},
        \end{equation}
        where the unit $1$ is a finite coordinate along the $s_3$ axis 
and  we have used the following estimate:
        \begin{equation} \label{eq38}
            \int_0^1 \frac{ds_3}{\sqrt{\rho^2+s_3^2}}=\ln\left(\left.s_3+\sqrt{\rho^2+s_3^2}\right)\right|_0^1
            \leq c\ln{\frac{1}{\rho}}, \quad \rho \to 0,
        \end{equation}
        where $c>1$ is a constant. Estimates \eqref{eq33} and \eqref{eq38} imply
        inequality \eqref{eq34}. Thus, the existence of a function
         \eqref{eq24} is proved, and
        \begin{equation} \label{eq39}
            ||h_\delta-h||_{L^2(D)}=||h||_{L^2(N_\delta)} \leq c\int_{N_\delta}dx \leq c\delta^2.
        \end{equation}
        Lemma \ref{lm4} is proved.
    \end{proof}
    From Lemma \ref{lm4} the conclusion of Theorem \ref{thm2} follows. Indeed, if $h \in  L^2(D)$ is an
    arbitrary function and $q$, defined by formula \eqref{eq16}, belongs to $L^2(D)$, then, as was proved
    above, the $u$, defined by formula \eqref{eq14}, is the scattering solution. This scattering solution is
    unique if Im$q \leq 0$ by Lemma \ref{lm2}.
    If $q$, defined by formula \eqref{eq16}, does not belong to $L^2(D)$, then there is a bounded
    function $q_\delta(x)$, approximating $q$, see \eqref{eq35}, such that the corresponding
    $h_\delta$ approximates $h(x)$ well, see \eqref{eq39}, and the corresponding $u_\delta(x)$ is
    the scattering solution corresponding to $q_\delta(x)$.

    Theorem \ref{thm2} is proved.
\end{proof}
\section{Formulas for solving inverse scattering problem with fixed $\alpha$ and $k>0$}
The inverse problem (IP) was formulated in the Introduction. Given
$\epsilon>0$ and an arbitrary $f(\beta) \in L^2(S^2)$ we first find
$h(x) \in L^2(D)$ such that
\begin{equation} \label{eq40}
    \left|\left|f(\beta)+\frac{1}{4\pi}\int_D e^{-ik\beta\cdot y}h(y)dy\right|\right|_{L^2(S^2)} \leq \epsilon.
\end{equation}
This can be done (non-uniquely !) in many ways. Let us describe one
of the ways. Without loss of generalities assume that $D=B=B_R$ is a
ball of radius $R$ centered at the origin. Expand the plane wave
$e^{-ik\beta\cdot y}$ and $h(y)$ into the spherical harmonics
series:
\begin{align}
    &e^{-ik\beta\cdot y}=\sum_{l=0}^\infty 4\pi i^l j_l(k r)Y_l(-\beta)\overline{Y_l(y^0)},
    \quad r=|y|, \quad y^0:=\frac{y}{r}, \label{eq41} \\
    &j_l(r):=\left(\frac{\pi}{2r}\right)^{1/2}J_{l+\frac{1}{2}}(r), \quad \frac{4\pi}{2l+1}\sum_{m=-l}^l
    Y_{lm}(x^0)\overline{Y_{lm}(y^0)}=P_l(x^0\cdot y^0), \label{eq42}
\end{align}
$J_{l+\frac{1}{2}}(r)$ is the Bessel function regular at the origin,
$Y_l(\alpha)$ are the spherical harmonics:
\begin{align}
    &Y_l(\alpha)=Y_{l,m}(\alpha)=\frac{(-1)^{\frac{m+|m|}{2}l}i^l}{\sqrt{4\pi}}\left[\frac{(2l+1)(l-|m|)!}
    {(l+|m|)!}\right] ^{1/2} e^{im\varphi} P_{l,m}(\cos\theta), \nonumber \\
    &\qquad -l\leq m\leq l, \label{eq43} \\
    &P_{l,m}(\cos\theta)=(\sin\theta)^m\frac{d^m P_{l}(\cos\theta)}{(d\cos\theta)^m},
    \quad P_l(t)=\frac{1}{2^l l!}\frac{d^l (t^2-1)^l}{dt^l}, \label{eq44}
\end{align}
$t=\cos\theta$, the unit vector $\alpha$ is described by the
spherical coordinates $(\theta,\varphi), 0 \leq \varphi < 2\pi, 0
\leq \theta \leq \pi, -l\leq m\leq l$, one has
\begin{equation} \label{eq45}
    Y_{l,m}(-\alpha)=(-1)^l Y_{l,m}(\alpha), \quad\overline{Y_{l,m}(\alpha)}=(-1)^{l+m} Y_{l,-m}(\alpha),
\end{equation}
where the overline stands for complex conjugate. The summation in
\eqref{eq41} and below is understood as $\sum_{l=0}^\infty
\sum_{m=-l}^l$.

Let
\begin{equation} \label{eq46}
    h(y)=\sum_{l=0}^\infty h_l(r) Y_l(y^0).
\end{equation}
It is well-known that
\begin{equation} \label{eq47}
    (Y_{l,m}Y_{l',m'})_{L^2(S^2)}=\delta_{ll'}\delta_{mm'},
\end{equation}
where $\delta_{ll'}$ is the Kronecker delta.

Let
\begin{equation} \label{eq48}
    f_L(\beta):=\sum_{l=0}^L f_l Y_l(\beta),
\end{equation}
where $f_l:=(f,Y_l)_{L^2(S^2)}$ are the Fourier coefficients of $f$.
For sufficiently large $L$ one has
\begin{equation} \label{eq49}
    ||f-f_L||_{L^2(S^2)} < \epsilon/2.
\end{equation}
Thus, if
\begin{equation} \label{eq50}
    \left|\left|f_L(\beta)+\frac{1}{4\pi}\int_B e^{-ik\beta\cdot y}h(y)dy\right|\right|_{L^2(S^2)} < \epsilon/2,
\end{equation}
then inequality \eqref{eq40} holds. Therefore, practically it is
sufficient to find $h$ satisfying inequality \eqref{eq50}.
Substitute \eqref{eq46} and \eqref{eq41} into the equation
\begin{equation} \label{eq51}
    \int_B e^{-ik\beta\cdot y}h(y)dy=-4\pi f_L(\beta),
\end{equation}
and use \eqref{eq47} and \eqref{eq45} to get
\begin{equation} \label{eq52}
    4\pi (-i)^l \int_0^R  r^2j_l(kr) h_l(r) dr= -4\pi f_l^{(L)}, \quad 0 \leq l \leq L.
\end{equation}
 Equation \eqref{eq52} can be written as
\begin{equation} \label{eq53}
    (-i)^{l+2} \int_0^R  r^2 j_l(k r)h_l(r)dr=f_l^{(L)}, \quad \forall l \geq 0.
\end{equation}
Recall that $f_l^{(L)}=f_{l,m}^{(L)}$ and $h_l(r)=h_{l,m}(r)$.

Equation \eqref{eq53} has many solutions.

Denote by $h_l^\bot(r)$ any function such that
\begin{equation} \label{eq54}
    \int_0^R  r^2 j_l(k r)h_l^\bot(r)dr=0.
\end{equation}
Then the general solution to equations \eqref{eq53} has the form
\begin{equation} \label{eq55}
     h_l(r)=(-i)^{-l-2}f_l^{(L)}\gamma_l j_l (k r)+c_l h_l^\bot(r),\quad 0 
\leq l \leq L,
\end{equation}
where $c_l$ are arbitrarily constants, and
\begin{equation} \label{eq56}
     \gamma_l := \frac{1}{\int_0^R dr r^2 j_l^2(k r)}.
\end{equation}
We have proved the following result.
\begin{Theorem} \label{thm3}
    The function \eqref{eq46}, with $h_l$ defined in \eqref{eq55} for $0 \leq l \leq L$ and $h_l=0$ for
    $l > L$, solves equation \eqref{eq51}. If $L=L(\epsilon)$ is sufficiently large, so that \eqref{eq49} holds,
    then the function
    \begin{equation} \label{eq57}
        h^{(L)}(y)=\sum_{l=0}^L h_l^{(L)}(r)Y_l(y^0), \quad r=|y|, \quad y^0=\frac{y}{r},
    \end{equation}
    with $h_l^{(L)}(r)$ defined in \eqref{eq55}, satisfies inequality \eqref{eq40}.
\end{Theorem}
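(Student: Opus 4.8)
The plan is to verify that the explicit series constructed in the statement does what is claimed, which amounts to checking three things in sequence: that each $h_l$ in \eqref{eq55} really solves the scalar equation \eqref{eq53}, that assembling these coefficients into \eqref{eq57} produces a function solving the operator equation \eqref{eq51}, and that the truncation error is controlled by \eqref{eq49} so that \eqref{eq40} follows.

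First I would substitute the ansatz \eqref{eq55} into \eqref{eq53}. The orthogonality condition \eqref{eq54} kills the $c_l h_l^\bot$ term, so only the term $(-i)^{-l-2} f_l^{(L)} \gamma_l j_l(kr)$ contributes, and by the definition \eqref{eq56} of $\gamma_l$ the integral $\int_0^R r^2 j_l^2(kr)\,dr$ cancels against $\gamma_l$, leaving exactly $f_l^{(L)}$. One should note here that $\gamma_l$ is well-defined and finite: $j_l(kr)$ is not identically zero on $(0,R)$, so $\int_0^R r^2 j_l^2(kr)\,dr > 0$. Next, I would expand $e^{-ik\beta\cdot y}$ via \eqref{eq41} and $h^{(L)}$ via \eqref{eq46}, integrate over $B=B_R$ in spherical coordinates, and use the orthonormality \eqref{eq47} together with the parity relation \eqref{eq45} to collapse the double integral to the radial integrals appearing in \eqref{eq52}--\eqref{eq53}; since those radial equations hold for $0\le l\le L$ and $h_l=0$ for $l>L$, the resulting expansion of $\int_B e^{-ik\beta\cdot y} h^{(L)}(y)\,dy$ is precisely $-4\pi \sum_{l=0}^L f_l^{(L)} Y_l(\beta) = -4\pi f_L(\beta)$, i.e.\ \eqref{eq51} holds.

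Finally, for the error estimate I would write
$$ f(\beta) + \frac{1}{4\pi}\int_B e^{-ik\beta\cdot y} h^{(L)}(y)\,dy = f(\beta) - f_L(\beta), $$
using the identity just established, and then invoke \eqref{eq49}: for $L$ large enough the right-hand side has $L^2(S^2)$-norm less than $\epsilon/2 < \epsilon$, which is even stronger than \eqref{eq40}. (In fact this shows $h^{(L)}$ satisfies \eqref{eq50}, hence a fortiori \eqref{eq40}.)

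I do not expect any serious obstacle: the result is essentially a bookkeeping verification that the construction in \eqref{eq52}--\eqref{eq56} was set up correctly, and the only points requiring a word of care are the positivity of the radial normalization integral defining $\gamma_l$ (so \eqref{eq55} makes sense), the legitimacy of interchanging the sum over spherical harmonics with the integration over $B$ (justified because for fixed $L$ the sum in \eqref{eq57} is finite, so no convergence subtlety arises), and correctly tracking the powers of $i$ and $(-i)$ through \eqref{eq41} and \eqref{eq45} so that \eqref{eq52} and \eqref{eq53} match. The potential $q$ is then recovered from $h^{(L)}$ via \eqref{eq16} (with the regularization of Theorem~\ref{thm2} if the naive quotient fails to lie in $L^2(D)$), but that step is outside the present statement.
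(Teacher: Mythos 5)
Your proposal is correct and follows essentially the same route as the paper: the paper's ``proof'' is precisely the derivation \eqref{eq41}--\eqref{eq56} preceding the theorem statement, namely substituting the expansions into \eqref{eq51}, reducing to the radial equations \eqref{eq53} via \eqref{eq45} and \eqref{eq47}, solving them by \eqref{eq55}--\eqref{eq56}, and combining \eqref{eq49} with \eqref{eq50} to obtain \eqref{eq40}. Your added remarks on the positivity of $\int_0^R r^2 j_l^2(kr)\,dr$ and the finiteness of the sum are harmless elaborations of the same argument.
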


Let us give a formula for a potential $q$ such that $h=qu$
approximates $h^{(L)}$ with a desired accuracy. This $q$ is a
solution to the inverse scattering problem (IP).

Given $h^{(L)}(y)$ defined in \eqref{eq57}, let us denote it $h(y)$
for simplicity. Using this function, calculate $q(x)$ formula
\eqref{eq16}. If this $q \in L^2(D)$, then the inverse problem (IP)
is solved. There is no guarantee that Im$q \leq 0$.

If formula \eqref{eq16} does not yield an $L^2(D)$ function, then one uses
$h_\delta(x)$ in place of $h(x)$ and, as was proved in Theorem \ref{thm2},
obtains a potential $q_\delta(x)$ which is a solution to (IP).

Other computational methods can be used for finding $h(y)$ given
$f(\beta)$. For example, one can choose a basis $\{\varphi_j\}$ in
$L^2(B)$, $B=D$ is a ball, and look for
\begin{equation} \label{eq58}
    h_n(x)=\sum_{j=1}^n c_j^{(n)} \varphi_j(x),
\end{equation}
where $c_j^{(n)}$ are constants to be found from the minimization problem
\begin{equation} \label{eq59}
    ||f(\beta)-\sum_{j=1}^n c_j^{(n)}g_j(\beta)||_{L^2(S^2)}=min.
\end{equation}
Here
\begin{equation} \label{eq60}
    g_j(\beta):=-\frac{1}{4\pi} \int_B e^{-ik \beta\cdot y} \varphi_j(y)dy.
\end{equation}
A necessary condition for the minimum in \eqref{eq59} is a 
linear algebraic system
for the coefficients $c_j^{(n)}, 1 \leq j \leq n$.

\newpage


\begin{thebibliography}{99}


\bibitem{A} S. Agmon, Spectral properties of Schr\"odinger
operators and scattering theory, Ann. Scuola Norm. Super. Pisa, 4,
(1975), 151-218.

\bibitem{R228} A.G. Ramm, Recovery of the potential from fixed
energy scattering data. Inverse Problems, 4, (1988), 877-886. 

\bibitem{R425} A.G. Ramm, Stability of solutions to inverse scattering
problems with fixed-energy data, Milan Journ of Math.,
70, (2002), 97-161.

\bibitem{R470} A.G. Ramm, {\it Inverse problems}, Springer,
New York, 2005.

\bibitem{R584} A.G. Ramm, Uniqueness theorem for inverse scattering 
problem with
non-overdetermined data, J.Phys. A, FTC, 43, (2010), 112001.

\bibitem{R589} A.G. Ramm, Uniqueness of the solution to inverse 
scattering problem
with backscattering data, Eurasian Math. Journ (EMJ), 1, N3, (2010),
97-111.

\bibitem{R515} A.G. Ramm, Distribution of particles which produces a
"smart" material, Jour. Stat. Phys., 127, N5, (2007), 915-934.

\bibitem{R595} A.G. Ramm,  Wave scattering by many small bodies and
creating materials
with a desired refraction coefficient, Afrika Matematika, 22, N1,
(2011), 33-55.

\bibitem{R603} A.G. Ramm, Uniqueness of the solution to inverse 
scattering problem
with scattering data at a fixed direction of the incident wave, J. Math. 
Phys., 52, (2011), 123506.

\bibitem{R190} A.G. Ramm, {\it Scattering by obstacles}, D.Reidel,
Dordrecht, 1986.



\end{thebibliography}
\end{document}